\title{Computing all $s$-$t$ bridges and articulation points simplified}
\author{Massimo Cairo \thanks{Department of Computer Science, University of Helsinki, Finland, \newline
Email:\{shahbaz.khan,sebastian.schmidt,alexandru.tomescu\}@helsinki.fi}
\and Shahbaz Khan \footnotemark[1]~\thanks{This work was partially funded by the European Research Council (ERC) under the European Union's Horizon 2020 research and innovation programme (grant agreement No.~851093, SAFEBIO).}
%
\and Romeo Rizzi \thanks{Department of Computer Science, University of Verona, Italy,
Email: romeo.rizzi@univr.it.}
%
\and Sebastian Schmidt \footnotemark[1]~\footnotemark[2]
%
\and Alexandru~I.~Tomescu \footnotemark[1]~\footnotemark[2]~\thanks{This work was partially funded by the Academy of Finland (grants No.~322595, 328877).} 
%
\and Elia Zirondelli \footnotemark[3]~\thanks{
Department of Mathematics, University of Trento, Italy, Email: eliacarlo.zirondelli@unitn.it.}}
\date{}
\newtheorem{theorem}{Theorem}
\newtheorem{lemma}[theorem]{Lemma}
\newcommand{\tail}{\textsc{tail}}
\newcommand{\head}{\textsc{head}}
\newcommand{\bridge}{$s$-$t$ bridge}
\newcommand{\bridges}{$s$-$t$ bridges}
\newcommand{\articulationpoint}{$s$-$t$ articulation point}
\newcommand{\articulationpoints}{$s$-$t$ articulation points}
\newcommand{\articulationsequence}{articulation sequence}
\begin{document}
	
	\maketitle
	
\begin{abstract}
Given a directed graph $G$ and a pair of nodes $s$ and $t$, an \emph{$s$-$t$ bridge} of $G$ is an edge whose removal breaks all $s$-$t$ paths of $G$. Similarly, an \emph{$s$-$t$ articulation point} of $G$ is a node whose removal breaks all $s$-$t$ paths of $G$. Computing the sequence of all $s$-$t$ bridges of $G$ (as well as the \articulationpoints{}) is a basic graph problem, solvable in linear time using the classical min-cut algorithm~\cite{FordF56}. 
	
	When dealing with cuts of {\em unit} size (\bridges{}) this algorithm can be simplified to {\em a single graph traversal from $s$ to $t$ avoiding an arbitrary $s$-$t$ path, which is {\em interrupted} at the $s$-$t$ bridges}. Further, the corresponding proof is also simplified making it independent of the theory of network flows. \\
		
\noindent
\textbf{Keywords:} reachability, graph algorithm, strong bridge, strong articulation point 

    \end{abstract}
	
	\section{Introduction}
	
	Connectivity and reachability are fundamental graph-theoretical problems studied extensively in the literature \cite{diestel10,HandbookGT,cormen01introduction,skiena}. 
	A key notion underlying such algorithms is that of edges (or nodes) critical for connectivity or reachability. The most basic variant of these are bridges (or articulation points), which are defined as follows. A \emph{bridge} of an undirected graph, also referred as \emph{cut edge}, is an edge whose removal increases the number of connected components. Similarly, a \emph{strong bridge} in a (directed) graph is an edge whose removal increases the number of strongly connected components of the graph. (Strong) articulation points are defined in an analogous manner by replacing edge with node.
	
	Special applications consider the notion of bridges to be parameterised by the nodes that become disconnected upon its removal~\cite{ItalianoLS12,Tarjan76}. Given a node $s$, we say that an edge is an \emph{$s$~bridge} (also referred as \emph{edge dominators} from source $s$~\cite{ItalianoLS12}) if there exists a node $t$ that is no longer reachable from $s$ when the edge is removed. Moreover, given both nodes $s$ and $t$, an \emph{$s$-$t$ bridge} (or $s$-$t$ articulation point) is an edge (or node) whose removal makes $t$ no longer reachable from $s$.

	\subparagraph*{Related work.} 
	For undirected graphs, the classical algorithm by Tarjan~\cite{Tarjan74} computes all bridges and articulation points in linear time. However, for directed graphs only recently Italiano et al.~\cite{ItalianoLS12} presented an algorithm to compute all strong bridges and strong articulation points in linear time. They also showed that classical algorithms~\cite{Tarjan76,GabowT85} compute $s$ bridges in linear time. The $s$ articulation points (or {\em dominators}) are extensively studied resulting in several linear-time algorithms~\cite{AlstrupHLT99,BuchsbaumKRW05,BuchsbaumGKRTW08}. 
	
	The $s$-$t$ bridges were essentially studied as minimum $s$-$t$ cuts in network flow graphs, where an $s$-$t$ bridge is a cut of {\em unit} size. The classical Ford Fulkerson algorithm~\cite{FordF56} can be used to identify the first $s$-$t$ bridge in the residual graph after pushing {\em unit} flow in the network. Moreover, contracting the entire cut to $s$, one can continue finding the next $s$-$t$ bridge and so on. Since $s$-$t$ bridges limit the maximum flow to {\em one}, the algorithm completes in linear time. 
	
	\section{Preliminaries}
	Let $G := (V, E)$ be a fixed directed graph, where $V$ is a set of $n$ nodes and $E$ a set of $m$ edges, with two given nodes $s,t\in V$. 
	Let $G\setminus X$ and $G - X$ denote the result of removing all edges from $X$, and all nodes from $X$ together with their incident edges, respectively.
	Given an edge $e=(u,v)$, $\head{}(e)=v$ denotes its \emph{head} and $\tail{}(e)=u$ denotes its \emph{tail}.

	\newcommand{\BrgS}{B}
    \newcommand{\brgS}{b}
	\newcommand{\ComB}{{\cal C}}
	\newcommand{\comB}{C}
	
	Let $\BrgS=\{\brgS_1,\brgS_2,...,\brgS_{|\BrgS|}\}$ be the set of $s$-$t$ bridges of $G$. By definition, for all $\brgS_i\in \BrgS$ there exists no path from $s$ to $t$ in $G\setminus \brgS_i$ (see \Cref{fig:bridgeCSa}), and all \bridges{} in $\BrgS$ appear on every $s$-$t$ path in $G$. 
	Further, the \bridges{} in $\BrgS$ are visited in the same order by every $s$-$t$ path in $G$.
	
	\begin{lemma}       
	The \bridges{} in $\BrgS$ are visited in the same order by every $s$-$t$ path in $G$.
	\label{lem:bridgeOrder}
	\end{lemma}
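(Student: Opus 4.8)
The plan is to argue by contradiction, using only the fact already recorded above that every \bridge{} lies on every $s$-$t$ path in $G$. Since an $s$-$t$ path is simple, each \bridge{} is traversed exactly once along it, so the relative order of any two \bridges{} on a fixed path is well defined. Suppose, for contradiction, that two \bridges{} $\brgS_i=(u_i,v_i)$ and $\brgS_j=(u_j,v_j)$ are visited in opposite orders by two $s$-$t$ paths: relabelling the paths if necessary, say $P$ visits $\brgS_i$ before $\brgS_j$, while $P'$ visits $\brgS_j$ before $\brgS_i$.

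The heart of the argument is a single splicing performed at $\brgS_j$. First I would cut $P'$ at the tail $u_j=\tail{}(\brgS_j)$, keeping its prefix $P'[s,u_j]$ from $s$ to $u_j$; then I would cut $P$ at the head $v_j=\head{}(\brgS_j)$, keeping its suffix $P[v_j,t]$ from $v_j$ to $t$. Concatenating $P'[s,u_j]$, the edge $\brgS_j=(u_j,v_j)$, and $P[v_j,t]$ yields a walk $W$ from $s$ to $t$, the two junctions being legitimate precisely because the first piece ends at $\tail{}(\brgS_j)$ and the second piece starts at $\head{}(\brgS_j)$.

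It then remains to check that $W$ avoids $\brgS_i$, which delivers the contradiction. On $P'$ the \bridge{} $\brgS_i$ is traversed after $\brgS_j$, hence after the prefix $P'[s,u_j]$, so $\brgS_i$ does not occur in that prefix; on $P$ the \bridge{} $\brgS_i$ is traversed before $\brgS_j$, hence before the suffix $P[v_j,t]$, so $\brgS_i$ does not occur in that suffix; and of course $\brgS_i\neq\brgS_j$. Therefore $W$ is an $s$-$t$ walk in $G\setminus\brgS_i$, so $t$ is reachable from $s$ after deleting $\brgS_i$, contradicting that $\brgS_i$ is an \bridge{}. The two \bridges{} must thus occur in the same order on $P$ and $P'$, and since they were arbitrary this forces a single consistent order on all of $\BrgS$.

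The step I expect to require the most care is the last one: verifying that $\brgS_i$ is genuinely absent from both spliced pieces. This is exactly where simplicity of $P$ and $P'$ is used, so that each traverses $\brgS_i$ at most once and on the ``correct'' side of $\brgS_j$; the conceptual leap is choosing to splice at $\brgS_j$, which is what makes the resulting walk avoid $\brgS_i$. (A heavier alternative would track, for each \bridge{} $\brgS_i$, the set of nodes still reachable from $s$ in $G\setminus\brgS_i$ and show these sets form a chain under inclusion, but the splicing argument sidesteps that entirely.) If one insisted on allowing non-simple $s$-$t$ walks, I would first reduce them to simple $s$-$t$ paths — the \bridges{}, lying on every path, survive this reduction — before performing the splice.
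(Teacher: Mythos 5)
Your proof is correct and is essentially the paper's argument in different packaging: the paper fixes $b_i$ and classifies every other bridge $b_j$ as either reachable from $s$ or able to reach $t$ in $G\setminus b_i$, and its key step---that $b_j$ cannot lie in both categories, since a path from $s$ into $b_j$ concatenated with a path from $b_j$ to $t$ would yield an $s$-$t$ path avoiding $b_i$---is exactly your splice at $b_j$. The prefix of $P'$ and suffix of $P$ that you extract are precisely the witnesses placing $b_j$ in those two categories, so the two proofs coincide on every essential point.
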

    
    \begin{proof}
    It is sufficient to prove that for any $\brgS_i\in  \BrgS$, all $b_j\in \BrgS$ (where $j\neq i$), can be categorised into those which are always visited before $\brgS_i$ and those that are always visited after $\brgS_i$ irrespective of the $s$-$t$ path chosen in $G$. Consider the graph $G\setminus \brgS_i$, observe that every such $\brgS_j$ is either reachable from $s$, or can reach $t$. It cannot fall in both categories as it would result in an $s$-$t$ path in $G\setminus \brgS_i$, which violates $\brgS_i$ being an \bridge{}. Further, it has to be in at least one category by considering any $s$-$t$ path of $G$, where $\brgS_i$ appears either between $s$ and $\brgS_j$ or between $\brgS_j$ and $t$. Hence, those reachable from $s$ in $G\setminus \brgS_i$ are always visited before $\brgS_i$, and those able to reach $t$ in $G\setminus \brgS_i$ are always visited after $\brgS_i$, irrespective of the $s$-$t$ path chosen in $G$.  
	\end{proof}

\begin{figure}[tbh!]
\centering
\includegraphics[width=0.45\textwidth]{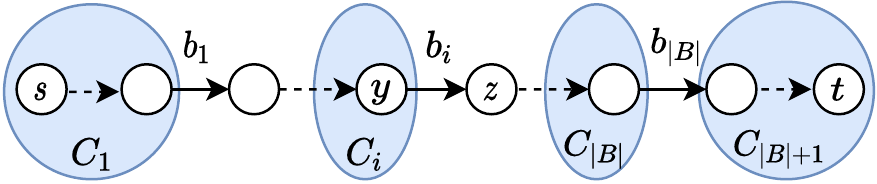}
\caption{Bridge sequence $\BrgS = \{\brgS_1,\dots,\brgS_{|B|}\}$ and corresponding bridge components $\ComB=\{\comB_1,\dots,\comB_{|B|+1}\}$.}
\label{fig:bridgeCSa}
\end{figure}

	Thus, abusing the notation we define $\BrgS$ to be a \emph{ sequence of \bridges{}} ordered by their visit time on any $s$-$t$ path. Also, such a bridge sequence $\BrgS$ implies an increasing part of the graph being reachable from $s$ in $G\setminus\brgS_i$, as $i$ increases. We thus divide the graph reachable from $s$ into \emph{bridge components} $\ComB=\{\comB_1,\comB_2,...,\comB_{|\BrgS|+1}\}$, where $\comB_i$ (for $i\leq |\BrgS|$) denotes the part of graph that is reachable from $s$ in $G\setminus \brgS_i$ but was not reachable in $G\setminus \brgS_{i-1}$ (if any). Additionally, for notational convenience we assume $\comB_{|B|+1}$ to be the part of the graph reachable from $s$ in $G$, but not in $G \setminus \brgS_{|B|}$ (see \Cref{fig:bridgeCSa}). Since bridge components are separated by \bridges{}, every $s$-$t$ path enters $\comB_i$ at a unique vertex ($\head{}(\brgS_{i-1})$ or $s$ for $\comB_1$) referred as its \textit{entry}. Similarly, it leaves $\comB_i$ at a unique vertex ($\tail{}(\brgS_i)$ or $t$ for $\comB_{|B|+1}$) referred as its \textit{exit}.
	
\begin{figure}[tbh!]
\centering
\includegraphics[width=.35\textwidth]{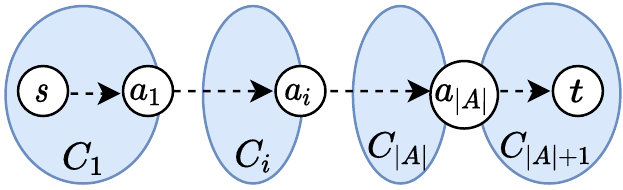}
\caption {Articulation sequence $A =\{a_1,\dots,a_{|A|}\}$ and its components $\ComB=\{\comB_1,\dots,\comB_{|A|+1}\}$ .}
\label{fig:bridgeCSb}
\end{figure}

	Similarly, the $s$-$t$ \emph{articulation points} are defined as the set of nodes $A\subseteq V$, such that removal of any \articulationpoint{} in $A$ disconnects all $s$-$t$ paths in $G$. Thus, $A=\{a_1,a_2,...,a_{|A|}\}$ is a set of nodes such that $\forall a_i\in A$ there exist no path from $s$ to $t$ in $G- a_i$. 
	The \articulationpoints{} in $A$ also follow a fixed order in every $s$-$t$ path (like \bridges{}), so $A$ can be treated as a sequence and it defines the corresponding components~$\ComB$ (see \Cref{fig:bridgeCSb}). Note that the {\em entry} and {\em exit} of an articulation component $\comB_i$ are the preceding and succeeding \articulationpoints{} (if any), else $s$ and $t$ respectively.

    \section{Algorithm}	
    \label{ss:bridgealgo}
    The algorithm can essentially be described as a {\it forward search} from $s$ to $t$ avoiding an arbitrary $s$-$t$ path, which is {\em interrupted} at the $s$-$t$ bridges (or \articulationpoints{}). It discovers the bridge sequence $\BrgS$ (or \articulationsequence~$A$) in order as the search proceeds.
	We first present a linear-time algorithm for \bridges{}, and then extend it to \articulationpoints{}.

	The algorithm first chooses an arbitrary $s$-$t$ path $P$ in $G$. Then it performs a forward search from $s$ to reach $t$ by avoiding the edges of $P$. This search is interrupted by the \bridges{}, since all the \bridges{} lie on $P$ (by definition). If the forward search stops before reaching $t$, it necessarily requires to traverse an edge $\brgS_1$ in $P$ (i.e. the first \bridge{}) to reach $t$. So we continue to forward search from $\head{}(\brgS_1)$ until we stop again to find the next \bridge{}, and so on until we reach $t$.
	When the search is interrupted for the $i^{th}$ time, we look at the last node $y$ on $P$, that was visited by the search. The \bridge{} $\brgS_i$ is then identified as the outgoing edge of $y$ on $P$. 
	
\begin{figure}[tbh!]
\centering
\begin{subfigure}[b]{0.25\textwidth}
\centering
\includegraphics[width=\textwidth]{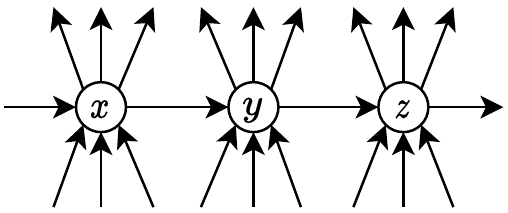}
\end{subfigure}\qquad
\begin{subfigure}[b]{0.25\textwidth}
\centering
\includegraphics[width=\textwidth]{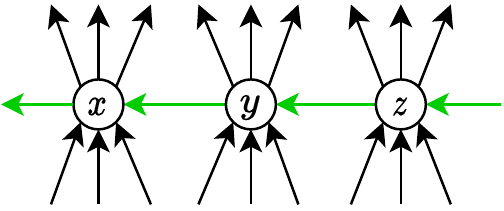}
\end{subfigure}
\caption{Transformation of the graph along $s$-$t$ path for computing $s$-$t$ bridges.}
\label{fig:stTransformB}
\end{figure}

	However, notice that once such a node $y$ is traversed, all the edges on $P$ preceding it are clearly not \bridges{}. Further, the paths starting from these edges may also allow the forward search to proceed further on $P$ beyond $y$. Hence these edges need to be traversed by the forward search before identifying an \bridge{}. This extra procedure can be embedded in the original forward interrupted search of the graph by performing a simple transformation of the graph. Essentially, instead of removing the path $P$ from the graph, we merely reverse it (see \Cref{fig:stTransformB}). Thus, traversing a node on $P$ makes its preceding edges reachable from $s$, ensuring that the traversal is interrupted only on the \bridges{}.

	\begin{algorithm}[tbh]
    	\caption{\textsc{Bridge sequence and Bridge components}}
    	\label{alg:stBC}
    	\KwIn{Graph $G := (V, E)$, $s, t \in V$}
    	\KwOut{Bridge sequence $\BrgS$ and bridge components $\ComB$ and associations $comp[\cdot]$} 
    	\DontPrintSemicolon
    	\BlankLine
    	$P \gets$ Arbitrary $s$-$t$ path in $G$\;
    	$G\gets (G\setminus P) \cup P^{-1}$
    	\tcp*{graph transform, reverse $P$}
    	$i\gets 1$
    	\BlankLine
		\While{$comp[t]=0$}{
		    \lIf{$i=1$}{$Q\gets s$ 
		    \tcp*[f]{initialise search from $s$}}
		    \Else{$y\gets$ Last node on $P$ with $comp[u]\neq 0$\;
		        Add $(y,z)$ to $\BrgS$
		            \tcp*{where $\brgS_i:(y,z)\in P$}
		        $Q\gets z, i\gets i+1$
		    \tcp*{continue search from $z$}
		    }
		    \BlankLine
		    \While(\tcp*[f]{forward search}){$Q\neq \emptyset$}{
		    $u\gets$ Remove node from $Q$\;
		    \ForAll{$(u,v)\in E$ where $comp[v]=0$}
		        {   Add $v$ to $Q$ and $\comB_i$\;
					$comp[v]\gets i$ \;
				}
			}
		}
	\end{algorithm}

	We now formally describe the algorithm (refer to the pseudocode in \Cref{alg:stBC}). After choosing an arbitrary $s$-$t$ path $P$, it transforms the graph as described above, by reversing the path $P$. Along with computing the bridge sequence $\BrgS$ and bridge components $\ComB$, we also ensure access to the component of a node $v$.  
	This is stored in $comp[v]$ which is initialised to $0$, also serving as an indicator that $v$ is not visited. Thereafter, it initiates the search with a queue $Q$ containing $s$, the entrance of $\comB_1$. The forward search continues removing and visiting nodes in $Q$, and adding their unvisited out-neighbours back to $Q$, until $Q$ becomes empty and the search stops. Every node $v$ visited during this search is assigned to $\comB_i$ and has $comp[v]=i$.  If $t$ is not visited yet, the last node $y$ in $P$ that was visited by the search is identified (i.e. the \emph{exit} of $\comB_i$). The \bridge{} $\brgS_i$ is identified to be the outgoing edge of $y$ in $P$ (say $(y,z)$, see \Cref{fig:bridgeCSa}), and added to $\BrgS$. The forward search is then continued from $z$ (i.e. the \emph{entrance} of $\comB_{i+1}$) by adding it to $Q$, and so on. Otherwise, if $t$ was already visited when the search stopped, we terminate the algorithm with bridge sequence $\BrgS$ and the bridge components and node associations in $\ComB$ and $comp[\cdot]$ respectively.

	\section{Analysis and Correctness}
	The algorithm essentially performs five steps which need to be analysed. 
	\textit{Firstly,} 
	computing an $s$-$t$ path $P$ which requires $O(m+n)$ time, by using standard search procedures like DFS or BFS traversal. 
	\textit{Secondly,} transforming the graph which essentially adds and removes $|P|$ edges each, requiring $O(n)$ time. 
	\textit{Thirdly,}  performing the interrupted forward search which only visits the previously unvisited nodes. Hence it requires total $O(m+n)$ time like a simple BFS traversal using the Queue $Q$ and the inner loop.
	\textit{Fourthly,} identifying the last visited node on $P$ which traverses $P$ only once in the whole algorithm, requiring total $O(n)$ time. 
	\textit{Finally,} updating $\BrgS$, $\ComB$ and $comp[t]$ requires to visit the outer loop for each \bridge{}, requiring total $O(n)$ time as the number of \bridges{} and nodes are $O(n)$. Thus, the algorithm requires overall $O(m+n)$ time to compute all the \bridges{} and the associated components.

	The correctness of the algorithm can be proven by maintaining the following invariant.
		 
    \noindent
    \textbf{Invariant $\cal I:$} \emph{	In the transformed graph, the forward search started from the {\em entrance} of $\comB_i$ 
	\begin{enumerate}
	    \item[$(a)$] visits exactly the nodes in $\comB_i$, and 
	    \item[$(b)$] stops along the path $P$ exactly on the \emph{exit} of $\comB_i$. 
	\end{enumerate}
    }
	\begin{proof}
	The graph transformation only affects the paths passing through the edges of $P$, as the remaining edges are unaffected. Hence, to prove ${\cal I}(a)$ it is sufficient to prove that the nodes on $P$ within $\comB_i$, say $v_1$ (entrance)$,v_2,...,v_k$(exit), are reachable from the entrance $v_1$. We prove it by induction over the nodes $v_j$, where the base case ($j=1$) is trivially true. 
	
	\begin{figure}[tbh!]
	\centering
	\includegraphics[width=.45\textwidth]{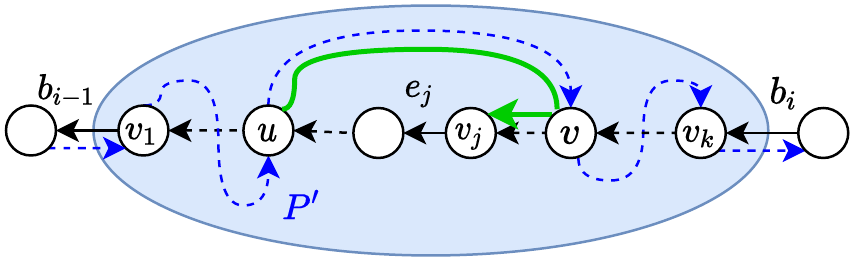}	\caption{Reachability of $v_j$ from $v_1$ (entrance) in the transformed graph. The path $P'$ (blue) is the alternate $s$-$t$ path avoiding $e_j$, which leaves $P$ before $e_j$ for the last time at $u$, then joins $P$ again after $e_j$ at $v$. The subpath of $P'$ from $u$ to $v$ followed by reverse path of $P$ to $v_j$ (green), reaches $v_j$ from $u$ and hence $v_1$.}
		\label{fig:stInvariant}
	\end{figure}

	For any $v_j$ (see \Cref{fig:stInvariant}), assuming nodes up to $v_{j-1}$ are reachable from the entrance, we consider the edge $e_j=(v_{j-1},v_{j})$. Since $e_j$ is not an \bridge{}, there exists a path $P'$ from $s$ to $t$ (and hence from entrance $v_1$ to exit $v_k$) without using $e_j$ in the original graph. 
	Let $u$ and $v$ respectively be the nodes at which $P'$ leaves $P$ for the last time before $e_j$, and the node at which $P'$ joins $P$ again after $e_j$.
	The nodes $u$ and $v$ necessarily exist as the path $P'$ passes through $v_1$ and $v_k$. Note that the subpath of $P'$ from $u$ to $v$ does not pass through any edge in $P$ by definition. Now, $u$ ($=v_j',j'<j$) is reachable from $v_1$ in the transformed graph by induction hypothesis, and there is a path from $v$ to $v_j$ in the reverse path of $P$. Thus, $v_j$ is reachable from $u$ (and hence $v_1$) through $P'$ in the transformed graph. 
	
	Using induction, we have every $v_j$
	and hence the entire $\comB_i$ is reachable from $v_1$, proving ${\cal I}(a)$.
    Further, since $\brgS_i$ is an \bridge{} which is reversed and hence removed in the transformed graph, the forward search cannot reach $\head{}(\brgS_i)$. This is because there is no other edge from $\comB_i$ to $\comB_{i+1}$ including the reversed edges of $P$. Hence, the forward search stops exactly at the exit $v_k$ along $P$ proving ${\cal I}(b)$.
	Now, by induction assuming $\comB_{i-1}$ is computed correctly, all $\comB_i$ would also be computed correctly proving ${\cal I}$ for all $\comB_i$. 
    \end{proof}

  \section{Extension for $s$-$t$ articulation points.}

\begin{figure}[tbh!]
\centering
\begin{subfigure}[b]{0.25\textwidth}
\centering
\includegraphics[width=\textwidth]{stTransformO}
\vspace{.5em}
\end{subfigure}\qquad
\begin{subfigure}[b]{0.3\textwidth}
\centering
\includegraphics[width=\textwidth]{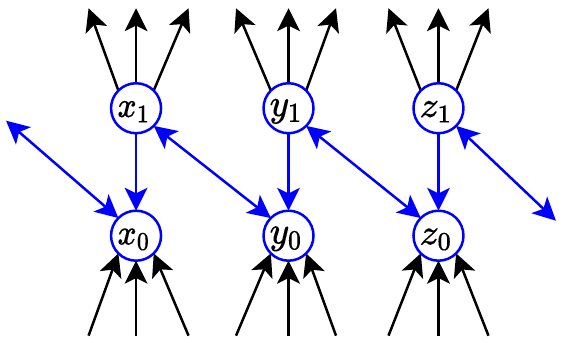}
\end{subfigure}
\caption{Transformation of the graph along $s$-$t$ path for computing $s$-$t$ articulation points.}
\label{fig:stTransformA}
\end{figure}
	
	In order to compute them efficiently, we can use the same algorithm with a different graph transformation. Each node $x$ on $P$ is split into two nodes $x_0$ and $x_1$, having all incoming edges now incoming to $x_0$ and all outgoing edges now outgoing from $x_1$. Further, we have an edge from $x_0$ to $x_1$, which is the \textit{internal edge} of the node $x$. This transformation maintains the path $P$ where each node is split into two by an internal edge. Thus, in the new graph the internal edges of the \articulationpoints{} also act as \bridges{}. Now, when the previous transformation is applied, it reverses the $s$-$t$ path $P$ thereby reversing the internal edges $(x_0,x_1)$ as well. Further, to prevent the search to interrupt at the original \bridges{}, the non-internal edges of $P$ are added back to $G$ (see \Cref{fig:stTransformA}). 

	On executing the same algorithm on the new graph, it reports all the \bridges{} of the modified graph, i.e., the internal edges of the \articulationpoints{}. Also, the components $\ComB$ reported are the corresponding components of $A$. The new transformation adds $|P|=O(n)$ nodes and edges to the graph, and the correctness and analysis follow the same arguments.
	Thus, we have the following theorem. 
	
	\begin{theorem}
	\label{t:linearst}
	Given a graph $G := (V, E)$ with $n$ nodes, $m$ edges and $s,t\in V$, there exists an algorithm to compute all $s$-$t$ bridges (or articulation points) along with their component associations, in $O(m+n)$ time.
	\end{theorem}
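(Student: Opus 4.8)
The plan is to prove the theorem by combining the two ingredients already developed above: for \bridges{}, the correctness statement of Invariant $\mathcal{I}$ together with the five-step running-time accounting; and for \articulationpoints{}, a reduction to the bridge case via the node-splitting transformation of \Cref{fig:stTransformA}.

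For \bridges{}, I would first unroll Invariant $\mathcal{I}$ over the iterations of the outer loop of \Cref{alg:stBC}. Part $(a)$ ensures that the $i$-th forward search visits exactly the nodes of $\comB_i$, so that $comp[\cdot]$ and the components $\ComB$ are recovered without error; part $(b)$ ensures that this search halts along $P$ exactly at the exit of $\comB_i$, so the last visited node $y$ on $P$ is this exit and its outgoing $P$-edge is exactly $\brgS_i$. The induction closes because the search for $\comB_{i+1}$ resumes from $\head{}(\brgS_i) = z$, the entrance of the next component, and by \Cref{lem:bridgeOrder} the bridges are emitted in their unique visit order; the loop terminates precisely when $comp[t]\neq 0$. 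The running time is then the five-step sum already given: $O(m+n)$ to compute $P$, $O(n)$ for the transformation, $O(m+n)$ amortized over all forward searches (each node and edge touched once), $O(n)$ to scan $P$ once in total, and $O(n)$ bookkeeping, hence $O(m+n)$.

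For \articulationpoints{}, I would reduce to the bridge case. The heart of the argument is a correspondence: after splitting each node $x$ on $P$ into $x_0 \to x_1$ (incoming edges into $x_0$, outgoing edges out of $x_1$, joined by the internal edge $(x_0,x_1)$), a node $x$ is an \articulationpoint{} of $G$ if and only if the internal edge $(x_0,x_1)$ is an \bridge{} of the transformed graph $G'$. I would establish this via the natural bijection between $s$-$t$ paths of $G$ and of $G'$ under which passing through $x$ corresponds to traversing $(x_0,x_1)$; deleting the node $x$ in $G$ then destroys exactly the same $s$-$t$ paths as deleting the edge $(x_0,x_1)$ in $G'$. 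Since every \articulationpoint{} lies on $P$ (it is on every $s$-$t$ path, in particular on $P$), all of them are split and hence captured, so running \Cref{alg:stBC} on $G'$ returns precisely the articulation sequence $A$ and its components. As the transformation adds only $|P| = O(n)$ nodes and edges, the $O(m+n)$ bound is preserved.

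The step I expect to be the main obstacle is the articulation-point correspondence, and in particular verifying that the two successive modifications compose correctly: the bridge algorithm reverses $P$ — thereby reversing the internal edges $(x_0,x_1)$ — and then adds the non-internal $P$-edges back. I would need to check carefully that this combination neither opens a spurious $s$-$t$ route bypassing an internal edge nor closes a legitimate one, so that in $G'$ the internal edges are exactly the \bridges{} on which the interrupted search stops. Once this correspondence is confirmed, both correctness and the linear-time bound for \articulationpoints{} follow immediately from the bridge case.
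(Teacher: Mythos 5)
Your proposal is correct and follows essentially the same route as the paper: correctness of \Cref{alg:stBC} via Invariant $\mathcal{I}$ together with the five-step $O(m+n)$ accounting for \bridges{}, and then the node-splitting transformation (with $P$ reversed and the non-internal $P$-edges added back) reducing \articulationpoints{} to the bridge case. The path-bijection correspondence that you flag as the main obstacle is precisely the step the paper handles only with the remark that ``the correctness and analysis follow the same arguments,'' so your plan simply makes explicit what the paper leaves implicit.
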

	
	\section{Conclusions}
	\label{s:conclusions}
	We have presented a simple algorithm for computing $s$-$t$ bridges  (or articulation points) along with their component associations. Further, it has a simpler proof and is easier to use in practice compared to \cite{FordF56}. 
	
	\bibliography{main}
	
	\end{document}